\newtheorem{theorem}{Theorem}[section]
\newtheorem{lemma}[theorem]{Lemma}
\renewcommand{\theequation}{\arabic{section}.\arabic{equation}}
\newcommand{\E}{\mathrm{E}}
\renewcommand{\theequation}{\arabic{section}.\arabic {equation}}
\begin{document}
\makeatletter
\def\@setauthors{%
\begingroup
\def\thanks{\protect\thanks@warning}%
\trivlist \centering\footnotesize \@topsep30\p@\relax
\advance\@topsep by -\baselineskip
\item\relax
\author@andify\authors
\def\\{\protect\linebreak}%
{\authors}%
\ifx\@empty\contribs \else ,\penalty-3 \space \@setcontribs
\@closetoccontribs \fi
\endtrivlist
\endgroup } \makeatother
 \baselineskip 19pt
\title[{{\tiny Optimal  retirement decision  and consumption  }}]
 {{\tiny Optimal  Retirement Time and  Consumption with the Variation in Habitual Persistence }}
 \vskip 10pt\noindent
\author[{Lin He, Zongxia Liang, Yilun Song, Qi Ye}]
{\tiny {\tiny  Lin He$^{a,\dag}$, Zongxia Liang$^{b,\ddag}$, Yilun Song$^{b,\S}$,
Qi Ye$^{b,*}$ }
 \vskip 10pt\noindent
{\tiny ${}^a$School of Finance, Renmin University of China, Beijing
100872, China \vskip 10pt\noindent\tiny ${}^b$Department of
Mathematical Sciences, Tsinghua University, Beijing 100084, China }
  \footnote{\\
$ \dag$ email: helin@ruc.edu.cn\\
 $ \ddag$ email: liangzongxia@mail.tsinghua.edu.cn\\
 $\S$   email: songyl18@mails.tsinghua.edu.cn\\
 $*$ Corresponding author, email:   yeq19@mails.tsinghua.edu.cn  }}
\maketitle
 \noindent
%
\begin{abstract}
 In this paper,we study the individual's optimal retirement time and optimal consumption under habitual persistence. Because the individual feels equally satisfied with a lower habitual level and is more reluctant to change the habitual level after retirement, we assume that both the level and the sensitivity of the habitual consumption decline at the time of retirement.  We establish the concise form of the habitual evolutions, and obtain the optimal retirement time and consumption policy  based on martingale and duality methods. The optimal consumption experiences a sharp decline at retirement, but the excess consumption raises because of the reduced sensitivity of the habitual level. This result contributes to explain the ``retirement consumption puzzle". Particularly, the optimal retirement and consumption   policies are balanced between the wealth effect and the habitual effect. Larger wealth increases consumption, and larger growth inertia (sensitivity) of the habitual level decreases  consumption and brings forward the retirement time.

\vskip 10 pt  \noindent
 Submission Classification: IB13, IE13, IM12.
\vskip 5pt  \noindent
 2020 Mathematics Subject Classification: 91G05, 91G10, 91B05, 91B06.
 \vskip 10pt  \noindent
 JEL Classifications :  G22,  C61, G11.
 \vskip 10pt \noindent
Keywords:  Optimal  retirement time; Retirement consumption puzzle; Habitual persistence; Optimal consumption; Martingale and duality methods.
\end{abstract}
\vskip 15pt
\setcounter{equation}{0}
\section{\bf Introduction}
In empirical evidence, we observe a sharp decline in the consumption
at retirement, which can not be explained by the classical
\cite{Merton1969, Merton1983} model of lifetime optimal consumption.
Literatures explain this ``retirement consumption puzzle" as the
results of more household productions, less work-related expenses,
restricted retirement incomes and more medical expenses, etc (cf.
\cite{BBT1998}, \cite{BSW2001}, \cite{HR2003}, \cite{H2003},
 \cite{S2005}, \cite{B2008}).
However, the explanations are still a bit controversial and need more
theoretical evidence. In this paper, we try to establish the
individual's optimal  retirement decision  and consumption problem under  habitual
persistence to clear the controversy in some extent, and establish some practical settings on the habitual
parameters. The results are explanatory to the empirical evidence.
\vskip 5pt The habit persistence was originally studied by
\cite{Pollak70} and \cite{RH73}.The habitual level is the weighted
average of the past consumption and only the excess consumption
produces utility. Basically, the optimization problems are divided
into two categories according to whether the consumption can be
lower than the  habitual level. \cite{DZ91}, \cite{Chapman98},
\cite{Yu15} and \cite{GLY2021} study the optimal consumption problem
under the additive habit framework. And \cite{Shrikhande97} and
\cite{DK03} study the problem under the non-addictive framework.
Under this framework, it is usually assumed that there exist the
minimal consumption constraints \  (cf. \cite{LLS2018}  and
\cite{ABY2020}).  We assume that the actual consumption can be lower
than the habitual level and the downward deviation is bounded for
the practical considerations. In this circumstance, the CRRA
(Constant Relative Risk Averse) utility is not well defined.
Inspired by \cite{Curatola2017} and \cite{BLN2020}, which study the
optimal consumption problem under loss aversion and habitual
reference, we explore the S-shaped utility in this paper. \vskip 5pt
The individual's life cycle is divided into two periods. He/She
earns wages and makes contributions to the social endowment
insurance before retirement, and receives benefits after retirement.
Hence the individual's wealth process is naturally described by
two-stage stochastic differential equations, as in \cite{CHX2018}.
For the retirement period, the individual's income usually declines
due to the low income replacement rate. However, the individual has
more leisure time in this period. As such, several ways are
implemented to depict the utility of leisure. In \cite{BBT1998} and
\cite{KPS2004}, a weight parameter larger than one is set to the
retirement consumption. This leads to a jump in the consumption at
retirement and contributes to explain the ``retirement consumption
puzzle". Then, \cite{CHX2018} assume a leisure weight related to the
retirement time and obtain interesting findings. \vskip 5pt It is
still very challenging to describe the utility of leisure in the
framework of habitual persistence. According to the statistical
results of the survey, we observe that the individual feels equally
satisfied with a lower actual consumption level after retirement.
This may be explained as that the individual has more leisure time
and makes more household productions. We model this effect as the
shrinking of the habitual level at the time of retirement. Under
this assumption, the same actual consumption leads to more excess
consumption and greater utility after retirement. Besides, we find
that the individuals are reluctant to change their habitual levels
after retirement. We observe that one time large (low) consumption has
limited impacts on the habitual consumption level, which is caused by the income constraint and hard to change
habit. As such, we assume that the sensitivity of the habitual
consumption declines after retirement. Under the relatively abundant
wealth scenario in practice, we expect a spiral increase of the
habitual consumption level. Because the inertia of the habitual growth
rate declines after retirement, the same actual consumption leads to
greater utility. \vskip 5pt The individual dynamically controls the
optimal asset allocation and consumption policies to achieve the
objective, i.e., maximizing the overall utility of the excess
consumption. One crucial parameter is the retirement time. Thus, the
problem is transformed into a two-stage optimization problem. In the
former stage, the individual's utility is maximized under the
assumption that the retirement time is given. In the latter stage,
the optimal retirement time is regarded as a pre-commitment policy,
which is consistent with the settings in \cite{HL2009},
\cite{HP2011} and \cite{CHX2018}. Literatures like
\cite{CS2006},\\\cite{CSS2008}, \cite{DL2010},\cite{KY2018} and
\cite{GLY2021} treat the optimal retirement time as a simultaneously
controlled policy and study the optimal stopping time problem with
free boundary. However, we believe that the pre-commitment
assumption is more practical. In this circumstance, the retirement
decisions are made relatively ahead of time, and do not depend on
the follow-up situations. This can be verified by the consistency
between the expected and the actual retirement time in the survey.
\vskip 5pt In order to establish the optimums of the individual's
optimization problem, we introduce the dirac function to represent
the dynamics of the habitual level concisely in a unified
expression. Using martingale and duality methods, as in
\cite{KLSX1991}, \cite{HLLM2020} and \cite{HLY2020}, we establish
the semi-analytical solutions of the problem. Because of the long
time horizon of the life cycle optimization, the expectation of the
optimal consumption may explode after a long time in the numerical
simulations. As such, we originally introduce the certainty
equivalence of the actual and the habitual consumption processes timely.
In addition, we establish an innovative analytical method to study
the impact of the exogenous parameters on the optimal retirement
time quantitatively. The theoretical and numerical results show that
the optimal consumption is affected by both of the wealth effect and
the habitual effect. Particularly, the optimal consumption decision
is balanced between the pressure of high habitual consumption level
and the time preference. The declines of the level and the
sensitivity of the habitual level reduce the pressure and weaken the
habitual effect after retirement. The excess consumption rises at
the time of retirement accordingly. However, the rise of the excess
consumption cannot offset the sharp decline in the habitual
consumption. Thus, we observe a decline in the actual consumption at
the time of retirement, which is consistent with the empirical
evidence of the ``retirement consumption puzzle". Furthermore,
larger wealth increases consumption, and larger wage (benefit)
postpones (brings forward) the retirement time. Interestingly,
larger growth inertia (sensitivity) of the habitual level decreases
consumption and brings forward the retirement time. In this
circumstance, the habitual effect is the dominance. Early retirement
is required to prevent the habitual level from raising too high and
reducing overall utility. \vskip 5pt The main contributions of this
paper are threefold: First, we establish the non-addictive optimal
retirement decision  and consumption problem under the framework of habitual persistence.
Particularly, we model the utility of the retirement leisure as the
declines in the level and the sensitivity of the habitual
consumption after retirement. Second, we first introduce the dirac function
to establish the unified expression of the habitual level under the
settings of variational habitual persistence characters. Through
this simple expression and
 using martingale and duality methods, we
establish the semi-analytical optimums of the stochastic control
problem. In addition, we quantitatively analyze the relationship
between the exogenous parameters and the optimal retirement time.
The last, the numerical results confirm the evidence that there
exists a sharp decline in the consumption at retirement. It is the comprehensive result
of the drop in the habitual consumption and the rise in the excess consumption. Besides, we identify the wealth effect and the
habitual effect, which have major impacts on the optimal consumption
and the optimal retirement time. \vskip 5pt The remainder of this
paper is organized as follows: Section 2 formulates the
non-addictive optimal consumption problem with the variation in the
habitual level and the habitual sensitivity. In Section 3, we
establish the optimal control policies and the value function
semi-analytically based on dirac function, martingale and duality
methods. Section 4 shows the optimal consumption process
numerically, which contributes to explain the ``retirement
consumption puzzle". In addition, we study the impacts of the
parameters on the optimal consumption and the optimal retirement
time in this section. The last section concludes the paper. \vskip
25pt \setcounter{equation}{0}
\section{\bf Problem formulation}
\vskip 5pt
\subsection{\bf Wealth process with the participation of social endowment insurance}

We consider the wealth process of an individual who participates in
the social endowment insurance. Before retirement, the individual
receives wage as the labor income and contributes part of the income
as the premium of the social insurance. After retirement, the
individual receives benefit from the social insurance. Meanwhile,
investment is allowed both before and after retirement. The
individual dynamically chooses the asset allocation and consumption
policies, as well as the retirement time to maximize the overall
utility of the consumption. \vskip 5pt First, let $B=\{B_t, \ t\geq
0\}$ be a Brownian motion  on  complete filtered  probability $
(\Omega,\mathcal{F},\mathbb{F},\mathbb{P}) $, the filtration  $
\mathbb{F}\!= \!\{\mathcal{F}_t: \ t\geq 0\} $ satisfies the usual
conditions and is generated by the Brownian motion $B$, i.e., $
\mathcal{F}_t=\sigma\{B_{u}, \ 0\leq u\leq t\}$,  $t\geq 0$. For
simplicity, we assume that the financial market consists of one
risk-free asset and one risky asset. \vskip 5pt The price of the
risk-free  asset $S_0=\{S_0(t), \ t\geq 0\}$  is  given by
\begin{align*}
{\frac{dS_0(t)}{S_0(t)}} = rdt, \  S_0(0)=s_0,
\end{align*}
and the price of the risky asset  $ S_1=\{S_1(t), \ t\geq 0\}$
follows the stochastic differential equation (abbr.  SDE):
\begin{align*}
{\frac{dS_1(t)}{S_1(t)}} = \mu dt+\sigma dB_t,\  S_1(0)=s_1,
\end{align*}
where $r$ is the risk-free interest rate,  $\mu$ and $\sigma$ are the
expected return and the volatility of the risky asset, respectively.
$s_0$ and $s_1$ are positive constants.
\vskip 5pt
The wage process $W =\{W(t), \  t\geq 0\}$ satisfies the following SDE:
\begin{align*}
\frac{dW(t)}{W(t)}=\alpha(t) dt+\beta(t) dB_t,\  W(0)=W_0,
\end{align*}
where $\alpha(t)$ and $\beta(t)$ are the expected growth rate and
the volatility of the wage at time $t$. \vskip 5pt Next, we
establish the settings of the social insurance rules and the
retirement time. The individual chooses his/her optimal retirement
time $\tau$, which lies within $[\tau_{min},\tau_{max}]$.
$\tau_{min}$ and $\tau_{max}$ are minimal and maximal retirement
time required by the government. In addition, we assume that the
optimal retirement time is a pre-commitment policy, that is, the
individual chooses the optimal time to retire at the initial time
rather than choosing it by constantly observing the follow-up
situations. The settings are consistent with the ones in
\cite{HP2011} and \cite{CHX2018}. \vskip 5pt Before retirement, the
individual mandatorily participates in the social endowment
insurance, and has the obligation to contribute $k$ proportion of
the wage as the insurance premium. After retirement, the individual
has the right to receive the benefit from the social insurance. The
amount of the benefit depends on the retirement time. Particularly,
the individual receives the amount of $g(\tau)De^{\xi t}$ as the
benefit at time $t$, where $D$ is the benefit at time $0$, $\xi$ is the growth
rate of the benefit, which is designed to maintain the purchasing
power, and  $g(\tau)$ is a penalty variable for early retirement. We
assume  $g(\tau)=e^{-\zeta(\tau_{st}-\tau)^+}$, where $\tau_{st}$ is
the statutory retirement time and $\zeta>0$ is the elastic
parameter. As such, $g(\tau)$ is increasing with respect to $\tau$
before $\tau_{st}$ and invariant after $\tau_{st}$. For early
retirement $(\tau<\tau_{st})$, the individual receives discounted
benefit and is encouraged to work to the statutory retirement age.
For the wage process, we assume $\alpha(t)=\alpha 1_{\{t\leq
\tau_{min}\}}$ and $\beta(t)=\beta 1_{\{t\leq \tau_{min}\}}$, where
$\alpha$ and $\beta$ are positive constants. It is a realistic
setting that the wage stops rising after a certain time. \vskip 5pt
The last, we establish the individual's wealth process. Assume that
$T$ is the maximal survival time. The dynamics of the wealth $X=\{
X_t , \  0\leq t\leq T\}$ is determined by three control variables:
the wealth allocated to the risky asset $\pi=\{\pi_t, \ 0\leq t\leq
T \}$, the consumption level $C=\{ C_t , 0\leq t\leq T \}$ and the
retirement time $\tau$. It is natural to assume that  $C:[0,T]\times
\Omega \to [0,\infty)$ and $\pi:[0,T]\times \Omega \to [0,\infty)$
are $\mathbb{F}$-progressively measurable and satisfy the
integrability condition $\int_0^T (C_t+\pi_t^2) dt<\infty$ almost
surely. Besides, $\tau$ is a pre-commitment control variable, as
such, we treat it as a given parameter in the first step of
optimization. \vskip 5pt The individual's wealth process satisfies
the following SDEs: \\ When $0\leq t\leq \tau$,
\begin{align*}
dX_t=rX_t + (\mu-r)\pi_t dt+(1-k)W_t dt+\sigma \pi_t dB_t -C_t
dt,
\end{align*}
and when $\tau\leq t\leq T$,
\begin{align*}
dX_t=rX_t + (\mu-r)\pi_t dt+g(\tau)D e^{\xi t} dt+\sigma \pi_t
dB_t -C_t dt.
\end{align*}
\vskip 5pt In the next subsection, we will establish the
individual's optimization objective and the admissible domain of the
control variables. \vskip 10pt
\subsection{Optimization objective with the variation in habitual consumption}
In this subsection, we establish the S-shaped utility function of
the individual and the variations in the level and the sensitivity
of the habitual consumption after retirement, which are better
depictions of the reality. \vskip 5pt First, we set up the habitual
consumption level of the individual. The habitual behavior was
originally studied by \cite{Pollak70} and \cite{RH73}, which
establish the criterion of evaluating consumption, and the habitual
level is measured by the weighted average of the past consumption.
Composing the S-shaped utility, only the consumption
exceeding the habitual level produces positive utility. The habitual
level $h=\{h_t, \ 0\leq t\leq T\}$ is defined by
\begin{align*}
dh_t=\left [\psi(t) C_t-\widetilde{\eta(t)} h_t\right ]dt,\  t\neq \tau,
\end{align*}
where $\psi(t)$ and $\widetilde{\eta(t)}>0$ are nonnegative and
bounded parameters varying with respect to time $t$.
$\psi(t)=\widetilde{\eta(t)}$ usually holds. Under this assumption,
the habitual consumption level is the arithmetic
average of the past consumption, as discussed in \cite{BLN2020}.
Particularly, when the actual consumption equals to the habitual
consumption, the habitual level remains unchanged. \vskip 5pt Next,
we establish the two important variations in the habitual
consumption after retirement. Briefly, both the level and the
sensitivity of the habitual consumption decline at the time of
retirement. The modifications of  $\{\psi(t), \  0\leq t\leq T\}$
and $\{\widetilde{\eta(t)},\  0\leq t\leq T\}$ are based on two
observations. The former observation is that the individual has more
leisure time and is able to make more household productions. As
such, the individual feels equally satisfied when the consumption
level declines. Because only the difference between the actual
consumption and the habitual consumption produces utility, this
observation could be modeled as the shrinking of habitual
consumption level after retirement. The latter observation is that
the retired individual's habitual level is sluggish to change over
time. The consumption after retirement is influenced by the wealth
constraints and the daily routines. It is unusual that one sudden
extremely high (low) consumption could change the habitual
consumption level a lot. Thus, the sensitivity of the variation in
the habitual consumption level also decreases after retirement.
\vskip 5pt Based on the above discussions, we assume $h_\tau=l
h_{\tau^-}$, $0<l <1$, which refers to the decline of the habitual
consumption level after retirement. Furthermore, we assume
$\psi(t)=\psi$, $\widetilde{\eta(t)}=\eta$ when $0\leq t<\tau$;
$\psi(t)=m\psi$, $\widetilde{\eta(t)}=m\eta$ when $\tau\leq t\leq
T$, where $\psi$ and $\eta$ are constants. And $0<m<1$ refers to the
decline of the sensitivity of the habitual consumption level. As
such, the habitual consumption level $h=\{h_t, \ 0\leq t\leq T\}$ is
given by
\begin{eqnarray}\label{habit}
h_t=\left\{ \begin{array}{lll}
&h_0 \cdot e^{-\eta t}+\psi \cdot \int_0^t e^{-\eta(t-s)}C_s ds, &0\leq t<\tau,\\
&l\cdot h_0\cdot e^{-m\eta(t-\tau)-\eta\tau}+l\cdot \psi\cdot \int_0^\tau
e^{-m\eta(t-\tau)-\eta(\tau-s)}C_s ds\\
&+m\cdot\psi\cdot \int_\tau^t
e^{-m\eta(t-s)}C_s ds, &\tau \leq t\leq T.
\end{array}\right.
\end{eqnarray}
In order to obtain the concise representation of $h_t$, we introduce
the dirac function $\delta(x)$, which is a generalized function
defined  by
\begin{align*}
\delta(x)=\left\{ \begin{array}{ll}
\infty, &x=0,\\
0, &x\neq 0,
\end{array}\right.
\end{align*}
and
\begin{align*}
\int_{-\infty}^{+\infty} \delta(x) dx=1.
\end{align*}
By substituting $\eta(s)$ with $\widetilde{\eta(s)} - \ln(l)
\delta(s-\tau)$ in \eqref{habit}, $h$ is transformed into
\begin{align}\label{habitnew2}
h_t=h_0 e^{-\int_0^t \eta(s)ds}+ \int_0^t e^{-\int_s^t
\eta(u)du}\psi(s)C_s ds.
\end{align}
As such,  $h$ has the following concise form:
\begin{align}\label{habitnew}
dh_t=\left [\psi(t) C_t-\eta(t)h_t\right ]dt,\  ~0\leq t\leq T.
\end{align}
In addition,   $\delta(\cdot)$  in \eqref{habitnew2} can also be  an indicator to distinguish the
pre-retirement and post-retirement scenarios.
\vskip 5pt The last,
we establish the S-shaped utility function of the individual. As
discussed above, only the difference between the actual consumption
and the habitual consumption produces utility. Besides, we assume
that  the actual consumption below the habitual level is allowed. As
such, we need a well defined utility function to express positive
and negative utilities synthetically, and we naturally choose the
S-shaped utility function. \vskip 5pt The S-shaped utility function
is represented as the compositions of two CRRA utilities:
\begin{align}\label{sutility}
u(C_s-h_s)=\frac{(C_s-h_s)^{1-\gamma}}{1-\gamma}1_{\{C_s\geq
h_s\}}+(-\kappa)\frac{(h_s-C_s)^{1-\gamma}}{1-\gamma}1_{\{C_s<
h_s\}},\  \forall
s\in [0, T],
\end{align}
where $\gamma$ refers to the relative risk aversion of the
individual, and $\kappa$ is the loss aversion parameter. The second
term in \eqref{sutility} defines the negative utility when the
individual's actual consumption is below the habitual level. In this
circumstance, the individual suffers from the gap of inadequate
consumption, as such, it produces negative utility. $\kappa>1$
represents the psychological phenomenon that the suffering of loss
is greater than the happiness of equal gain. \vskip 5pt Therefore,
we assume that the retirement time $\tau$ is given in the first
step. The individual's objective is to maximize the overall utility
of the consumption:
\begin{align}\label{objective}
V(\tau)=\max_{\pi, C} \E \left\{\int_0^T e^{-\rho s} u(C_s,h_s) ds\right\},
\end{align}
where $\rho$ refers to the rate of time preference along with the
mortality rate. Actually, the integral in \eqref{objective} is
divided into two parts. Before retirement, the individual earns wage
and contributes to the social endowment insurance. After retirement,
the individual receives benefit. At the same time, the
level and the sensitivity of the habitual consumption both decline
in this period. In the second step, we try to establish
$\arg\max\limits_{\tau}\{ V(\tau)\}$ satisfying
\begin{align}\label{objective2}
V=\max_{\tau}\left\{V(\tau)\right\},
\end{align}
which is the solution of the static optimization problem. \vskip 5pt
Furthermore, we study the restrictions on the admissible domain  of
the control variables. Although we set up the non-addictive
consumption, the actual consumption should be above some threshold
to maintain the minimal standard of living. We assume that there
exists $L\geq 0$ such that $h_s-C_s\leq L$ holds for $ s\in [0,
T]$. Particularly, when $L=0$,  the actual consumption below the
habitual consumption is not allowed and it depicts the addictive
consumption model. In addition, the individual may initiate some
loans to consume the labor capital in advance. And he/she is
required to repay the loans at the maximal survival time. As such,
the triple control $\mathfrak{a}\triangleq(\tau,\pi,C)$ is called
admissible if the individual's wealth $X_T^\mathfrak{a}$, under the
$\mathfrak{a}$, remains nonnegative at time $T$, i.e.,
\begin{align*}
X_T^{\mathfrak{a}}\geq 0
\end{align*}
almost surely. We denote the family of admissible triple controls
 $\mathfrak{a}\triangleq(\tau,\pi,C)$ by $\mathcal{A}$.
\vskip 5pt Our goal is to solve the individual's optimization
problem (\ref{objective})-\eqref{objective2} within the admissible
domain $\mathcal{A}$. The optimal asset allocation and consumption
policies will be established correspondingly. In addition, the
optimal retirement time will be obtained by the pre-commitment
optimization after the value function $V(\tau)$ being
established. \vskip 15pt \setcounter{equation}{0}
\section{\bf Solution of the stochastic optimization problem}
In this section, we assume that the retirement time is given in the
first step of optimization. Using martingale and duality methods, we
derive the semi-analytical value function of the
individual and establish the optimal asset allocation and
consumption policies correspondingly. In the second step of
optimization, we treat the optimal retirement time as a
pre-commitment policy and establish the optimal solution by
numerical methods. Interestingly, we can quantitatively analyze the
relationship between the optimal retirement time and the exogenous
parameters by introducing an innovative analytical method. \vskip
5pt First, we solve the individual's stochastic optimization problem
with the given retirement time $\tau$. The state price density
process $H=\{H_t,\  0\leq t\leq T\}$ is  as follows:
\begin{align*}
H_t=\exp \{ -rt-\frac{1}{2}\theta^2 t-\theta B_t\},\ t\in [0, T],
\end{align*}
where  $\theta=\frac{\mu-r}{\sigma}$ is the market price  of risk.
Using It\^o's lemma,  we have
\begin{eqnarray}\label{wealth}
H_t X_t&=&\int_0^t\left[ (1-k)H_sW_s1_{\{s\leq \tau\}}+ g(\tau)De^{\xi t}
H_s1_{\{s>\tau\}}-C_sH_s \right ]ds\nonumber\\
&&+\int_0^t\left[ \sigma \pi_s H_s-\theta X_sH_s\right] dB_s, \ t\in [0, T].
\end{eqnarray}
For any $(\pi,C)\in\mathcal{A}_\tau \triangleq \{( \pi,C) : (\tau,\pi,C)\in \mathcal{A}       \}$ for given $ \tau $, $X_t\geq \underline{X_t}$ is valid,
and  $\underline{X_t}$ is the lower bound of $X_t$. The details to
obtain the explicit form of $\underline{X_t}$ is given in Appendix \ref{A.1}.  Thus, $\{\int_0^t \left [\sigma \pi_s H_s-\theta X_sH_s\right ] dB_s, \ 0\leq t\leq T\}$ is a
supermartingale. Using the restriction that $X_T\geq 0$, we know that the consumption level satisfies the following
restrictions:
\begin{align}\label{termone}
\E\left[\int_0^T C_sH_s ds\right]\leq \E\left \{\int_0^T\left[ (1-k)H_sW_s1_{\{s\leq \tau\}}+
g(\tau)De^{\xi t} H_s1_{\{s>\tau\}}\right ]ds\right\}.
\end{align}
However, the habitual consumption process is introduced in the
utility function and only the difference between the actual
consumption and the habitual consumption produces utility. As such,
$c_s\triangleq C_s-h_s$ is the actual control variable, and the state price
density process should be adjusted accordingly to solve the
optimization problem. \vskip 5pt The adjusted state price density
process $\Gamma=\{  \Gamma_t, \ 0\leq t\leq T\}$ is defined by
\begin{align*}
\Gamma_t\triangleq H_t+\psi(t) \E_t\left \{\int_t^T e^{-\int_t^s
\eta(u)-\psi(u) du}H_sds\right \}, \quad t\in[0,T].
\end{align*}
And $\Gamma $  is the solution of the following recursive linear
stochastic equation (cf. \cite{DZ92}):
\begin{align*}
\Gamma_t=H_t+\psi(t) \E_t\left\{\int_t^T e^{-\int_t^s \eta(u)du} \Gamma_s
ds\right \}, \quad t\in[0,T],
\end{align*}
where $\E_t(\cdot)\triangleq\E\left[\ \cdot\ |\mathcal{F}_t\right ]$.
And its exact value is derived in Appendix \ref{A.2}. Particularly,
when $\psi=0$, $\Gamma_t$ degenerates to $H_t$. \vskip 5pt
Substituting \eqref{habitnew2}, we have
\begin{align}\label{stateprice}
&\E\left[\int_0^T c_s \Gamma_s ds\right]\nonumber\\
=&\E\left\{\int_0^T \left [C_s-h_0 e^{-\int_0^s \eta(u)du}-\psi(s) \int_0^s e^{-\int_u^s \eta(v)dv}C_u du\right ] \Gamma_s ds\right\}\nonumber\\
=&\E\left\{\int_0^T \left[\Gamma_s-\psi(s) \E_s\left(\int_s^T e^{\int_s^u \eta(v)dv}
\Gamma_u du\right )\right] C_sds\right\}\nonumber\\
&-\E\left[\int_0^T h_0 e^{-\int_0^s \eta(u)du} \Gamma_s
ds\right]\nonumber\\
=&\E\left[\int_0^T  H_s C_s ds\right]-h_0 \E\left[\int_0^T e^{-\int_0^s
\eta(u)du}\Gamma_s ds\right].
\end{align}
Fortunately, Fubini theorem could be applied to the generalized
function in the second equation. This helps to simplify the form of
the equation when we decompose the integral into the pre-retirement
and the post-retirement parts. \vskip 5pt The first term of
Eq.\eqref{stateprice} is estimated in \eqref{termone}, and  we
denote  the second term by
\begin{eqnarray}\nonumber
z\triangleq \E\left[\int_0^T e^{-\int_0^s \eta(u)du} \Gamma_s ds\right ],
\end{eqnarray}
whose exact value is derived in Appendix \ref{A.3}.
\vskip 5pt
Thus, we establish the subsidiary problem of the original problem, that is, maximizing the expected utility:
\begin{align}\label{sub pro}
\E\left[\int_0^{T} u_s(c_s) ds\right ],
\end{align}
where $u_s(c_s)=e^{-\rho s} u(c_s)$, $s\in [0,T]   $, with the restrictions:
\begin{align}\label{restriction}
\E\left [\int_0^T c_s \Gamma_s ds\right ]\leq A-h_0 z,
\end{align}
where
\begin{align*}
A = \E\left\{\int_0^T\left[ (1-k)H_sW_s1_{\{s\leq \tau\}}+ g(\tau)De^{\xi t}
H_s1_{\{s>\tau\}}\right ]ds\right\}.
\end{align*}
The exact value of $A$ is derived in Appendix \ref{A.3}. $A$ is the
discounted present value of the incomes of the wages (excluding
social insurance contribution) and the benefits from the social
insurance under the risk neutral probability $Q$. The risk neutral
probability $Q$ on $(\Omega, \mathcal{F})$ is defined by
\begin{align*}
\frac{dQ}{dP}{\big|}_{\mathcal{F}_T}=\exp \left\{ -\frac{1}{2}\theta^2 T-\theta B_T\right\}.
\end{align*}
Remarkably,  if the individual's initial wealth is $A$ and has no
subsequent incomes, the corresponding optimal asset allocation and
consumption policies will be equivalent to the original optimization
problem. \vskip 5pt In order to make the problem \eqref{sub pro}
well defined,  a prerequisite is added:
\begin{align} \label{prerequisite}
\E\left [\int_0^T -L \Gamma_s ds\right ]\leq A-h_0 z.
\end{align}
Thus, the incomes of the wages and the benefits are adequate to
support the minimal consumption level.

\begin{lemma}
The subsidiary problem \eqref{sub pro} obtains its maximum when
\begin{align}\label{crep}
c_s=c^*_s\triangleq\mathcal{Y}_s(\nu_\tau \Gamma_s), \ s\in [0, T],
\end{align}
where
\begin{align}
\mathcal{Y}_s(y)=\arg\max_{x\geq -L} \left\{u_s(x)-xy\right \},\ s\in [0, T].
\end{align}
And $\nu>0$ (the value can be $+\infty$) satisfies the following
equation:
\begin{align}\label{exist condition}
\E\left[\int_0^T \mathcal{Y}_s(\nu \Gamma_s) \Gamma_s ds\right]= A-h_0 z.
\end{align}
\end{lemma}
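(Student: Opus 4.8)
The plan is to recognize \eqref{sub pro}–\eqref{restriction} as a standard static utility maximization problem with a single linear constraint, and solve it by the classical Lagrangian/convex-duality argument. First I would fix a candidate Lagrange multiplier $\nu>0$ and, for each fixed $s$ and $\omega$, consider the pointwise problem $\max_{x\ge -L}\{u_s(x)-\nu\Gamma_s x\}$. Since $u_s(x)=e^{-\rho s}u(x)$ with $u$ the $S$-shaped utility in \eqref{sutility}, the map $x\mapsto u_s(x)-\nu\Gamma_s x$ is (for the relevant parameter range $0<\gamma<1$) eventually strictly concave on $\{x\ge 0\}$ and the attained supremum defines $\mathcal{Y}_s(\nu\Gamma_s)$; I would record that $x\mapsto\mathcal{Y}_s(y)$ is nonincreasing in $y$ (from concavity of the concavified problem) and that the constraint set for the multiplier, namely the map $\nu\mapsto \E[\int_0^T\mathcal{Y}_s(\nu\Gamma_s)\Gamma_s\,ds]$, is continuous and nonincreasing, decreasing from $\E[\int_0^T(-L)\Gamma_s\,ds]$ (as $\nu\to\infty$, using \eqref{prerequisite}) up toward a value $\ge A-h_0z$; hence a solution $\nu_\tau$ to \eqref{exist condition} exists (allowing $\nu_\tau=+\infty$ when the budget is not binding, which corresponds to $c^*_s\equiv -L$).

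Next I would run the verification (sufficiency) argument: let $c^*_s=\mathcal{Y}_s(\nu_\tau\Gamma_s)$ and let $c$ be any admissible control, i.e. $c_s\ge -L$ a.s. and $\E[\int_0^T c_s\Gamma_s\,ds]\le A-h_0z$. By the defining property of $\mathcal{Y}_s$,
\begin{align*}
u_s(c_s)-\nu_\tau\Gamma_s c_s\le u_s(c^*_s)-\nu_\tau\Gamma_s c^*_s\quad\text{for all }s,\ \omega.
\end{align*}
Integrating over $[0,T]$, taking expectations, and using $\nu_\tau>0$ together with the budget constraint on $c$ and the equality \eqref{exist condition} for $c^*$ gives
\begin{align*}
\E\!\left[\int_0^T u_s(c_s)\,ds\right]\le \E\!\left[\int_0^T u_s(c^*_s)\,ds\right]+\nu_\tau\!\left(\E\!\left[\int_0^T c_s\Gamma_s\,ds\right]-(A-h_0z)\right)\le \E\!\left[\int_0^T u_s(c^*_s)\,ds\right],
\end{align*}
so $c^*$ is optimal. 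The $\nu_\tau=+\infty$ case is handled separately: then \eqref{prerequisite} holds with equality, $c^*_s=-L$ is the only admissible control up to the budget, and optimality is immediate.

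The main obstacle I anticipate is the lack of global concavity of the $S$-shaped utility $u$ in \eqref{sutility}: on the loss region $x<0$ the function $-\kappa(-x)^{1-\gamma}/(1-\gamma)$ is convex, so the pointwise problem $\max_{x\ge -L}\{u_s(x)-y x\}$ need not have a unique or interior maximizer, and the naive first-order condition can fail. I would deal with this by passing to the concave envelope $\widehat{u}$ of $u$ on $[-L,\infty)$ and showing that on the budget line the optimizer of the concavified problem is never attained in the strictly-non-concave (``convex gap'') region — because replacing a consumption in that region by an appropriate randomization/boundary value weakly improves the objective while respecting the linear budget — so that $\mathcal{Y}_s$ computed with $u$ coincides with the one computed with $\widehat u$, and the verification inequality above goes through with $\widehat u$ (hence with $u$, since $\widehat u\ge u$ with equality at $c^*$). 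A secondary technical point is justifying the measurable-selection of $s\mapsto\mathcal{Y}_s(\nu\Gamma_s)$ and the integrability $\E[\int_0^T|u_s(c^*_s)|\,ds]<\infty$ and $\E[\int_0^T c^*_s\Gamma_s\,ds]<\infty$; these follow from boundedness of the coefficients $\psi,\eta$, the explicit moment bounds on $H$ and $\Gamma$ (Appendices \ref{A.2}, \ref{A.3}), and the power growth of $\mathcal{Y}_s$, and I would state them as routine estimates rather than carry them out in detail.
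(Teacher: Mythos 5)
Your proposal is correct and follows essentially the same route as the paper: existence of the multiplier $\nu_\tau$ via monotonicity and continuity of $\nu\mapsto\E\left[\int_0^T\mathcal{Y}_s(\nu\Gamma_s)\Gamma_s\,ds\right]$ combined with the prerequisite \eqref{prerequisite} and an intermediate-value argument, followed by the standard pointwise Lagrangian verification inequality. The only remark is that your anticipated obstacle about concavifying the S-shaped utility is not actually needed for the verification step, since $u_s(c_s)-\nu\Gamma_s c_s\le u_s(c^*_s)-\nu\Gamma_s c^*_s$ holds by the very definition of the argmax $\mathcal{Y}_s$ regardless of concavity; the concave envelope (equivalently, the tangent-line construction via $y_0(s)$) only enters when one computes $\mathcal{Y}_s$ explicitly, as in Appendix \ref{A.4}.
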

\begin{proof}
Define $f(x)\triangleq \E\left[\int_0^T \mathcal{Y}_s(x \Gamma_s)
\Gamma_s ds\right ]$. Then $f$ is a non-increasing continuous function
based on the property of $\mathcal{Y}_s$  and monotone convergence theorem. In addition, as
$\mathcal{Y}_s(0+)=\infty$ and $\mathcal{Y}_s(\infty)=-L$, we have
the estimations: $f(0+)=\infty$ and $f(\infty)=\E\left [\int_0^T -L
\Gamma_s ds\right ]$. According to the assumption that
$\E\left[\int_0^T -L \Gamma_s ds\right ]\leq A-h_0 z$, we know that
there exists a $\nu$ satisfying \eqref{exist condition}.

Furthermore, the maximal expected utility of \eqref{sub pro} can be
obtained by the following implicit form:
\begin{align*}
\E\left[\int_0^T V_s(\nu \Gamma_s)ds\right ]+\nu(A-h_0 z),
\end{align*}
where
\begin{align*}
V_s(y)\triangleq\max_{x\geq -L} \left\{u_s(x)-xy\right\}, \ s \in [0, T].
\end{align*}
Based on the definitions of $\mathcal{Y}$ and $V$,  it is easy to deduce that  $\E\left[\int_0^{T} u_s(c_s) ds\right]$\\
$\leq \E\left [\int_0^{T} (V_s(\nu \Gamma_s)+c_s\nu \Gamma_s) ds\right
]=\E\left [\int_0^T V_s(\nu \Gamma_s)ds\right ]+\nu(A-h_0 z)$. And
the equality holds when $c_s=\mathcal{Y}_s(\nu \Gamma_s)$.
\end{proof}
\vskip 5pt
We have established the necessary condition of the
existence of $c^*\triangleq\{c^*_t, \ 0\leq t\leq T\}$. We still
need to establish the corresponding  optimal consumption policy
 $C^*\triangleq \{C^*_t, \ 0\leq t\leq T\}$ and
optimal risky investment policy  $\pi^*\triangleq\{\pi^*_t , \ 0\leq
t\leq T\}$ to make the process $c^*$ attainable.
\vskip 5pt
 As $ h$ satisfies the ordinary
differential equation (abbr. ODE):
\begin{align*}
dh_t&=\left[\psi(t) C_t-\eta(t) h_t\right ]dt\\
&=\left[\psi(t)(c_t+h_t)-\eta(t)
h_t\right ]dt\nonumber\\
&=\left [\psi(t) c_t-(\eta(t)-\psi(t))h_t\right ]dt,\ 0\leq t\leq T,
\end{align*}
solving this ODE, we have, for $ t\in [0, T]$,
\begin{align}\label{hc}
h_t=e^{-\int_0^t(\eta(s)-\psi(s)) ds}h_0+\psi(t) \int_0^t e^{-\int_0^s (\eta(u)-\psi(u)) du} c_s ds.
\end{align}
By combining \eqref{crep} and \eqref{hc}, the explicit form of the
consumption level $C^*$ is given as follows:
\begin{align*} C^*_t=\mathcal{Y}_t(\nu
\Gamma_t)+e^{-\int_0^t(\eta(s)-\psi(s)) ds}h_0+\psi(t) \int_0^t
e^{-\int_0^s (\eta(u)-\psi(u)) du} \mathcal{Y}_s(\nu \Gamma_s)ds, \ \
t\leq T.
\end{align*}
However, the corresponding asset allocation policy  $ \pi^*  $ is
not easy to derive. We need to reformulate the wealth
process $X^*$ skillfully and construct the corresponding $\pi^*$ to
make the wealth process attainable. \begin{theorem} If Assumption
\eqref{prerequisite} is satisfied, the corresponding risky
investment policy $\pi^*$ is given by
\begin{align*}
\pi^*_t=\frac{1}{\sigma}(\frac{\varphi_t}{H_t}+\theta X^*_t), \ t\leq T,
\end{align*}
where the process $\varphi$ is defined in Eq. \eqref{mrt}.
\end{theorem}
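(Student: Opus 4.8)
The plan is to use the martingale representation theorem to pin down the hedging portfolio $\pi^*$ from the optimally-invested wealth process, following the standard Karatzas–Lehoczky–Shreve–Xu approach adapted to the present two-stage setting with the adjusted state price density $\Gamma$. First I would define the candidate optimal wealth process $X^*$ by the budget-constraint representation: starting from the fact that, if the agent consumes $C^*$ and receives the stochastic income stream $Y_s\triangleq (1-k)W_s1_{\{s\le\tau\}}+g(\tau)De^{\xi s}1_{\{s>\tau\}}$, then the discounted wealth must satisfy
\begin{align}\label{Xstarrep}
H_tX^*_t=\E_t\left\{\int_t^T\bigl(C^*_s-Y_s\bigr)H_s\,ds\right\},\quad t\in[0,T].
\end{align}
One checks using Eq.~\eqref{stateprice}, Eq.~\eqref{termone} and the existence relation Eq.~\eqref{exist condition} that $X^*_0$ equals the prescribed initial wealth and that $X^*_T\ge 0$ almost surely, so $X^*$ is a legitimate candidate. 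The key observation is that the process
\begin{align*}
M_t\triangleq H_tX^*_t+\int_0^t\bigl(C^*_s-Y_s\bigr)H_s\,ds=\E_t\left\{\int_0^T\bigl(C^*_s-Y_s\bigr)H_s\,ds\right\}
\end{align*}
is a uniformly integrable $\mathbb{F}$-martingale, hence, since $\mathbb{F}$ is the Brownian filtration, it admits a representation $M_t=M_0+\int_0^t\varphi_s\,dB_s$ for some $\mathbb{F}$-progressively measurable $\varphi$ with $\int_0^T\varphi_s^2\,ds<\infty$ a.s.; this $\varphi$ is exactly the process referenced as Eq.~\eqref{mrt}.

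Next I would match this representation against the wealth SDE. Applying It\^o's lemma to $H_tX^*_t$ using the dynamics of $H$ and the wealth SDEs in both regimes $0\le t\le\tau$ and $\tau\le t\le T$ (the income term $Y_s$ is precisely what makes the two pieces glue together, since $H_tX_t+\int_0^t(C_s-Y_s)H_s\,ds$ has the same form $\int_0^t(\sigma\pi_sH_s-\theta X_sH_s)\,dB_s$ of Eq.~\eqref{wealth} on both intervals), the $dB_t$-coefficient of $M_t$ must equal $\sigma\pi^*_tH_t-\theta X^*_tH_t$. Equating this with $\varphi_t$ and solving gives
\begin{align*}
\pi^*_t=\frac{1}{\sigma}\left(\frac{\varphi_t}{H_t}+\theta X^*_t\right),\quad t\le T,
\end{align*}
which is the claimed formula. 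It then remains to verify admissibility: $\pi^*$ is progressively measurable by construction, the integrability $\int_0^T(\pi^*_t)^2\,dt<\infty$ a.s.\ follows from the martingale-representation integrability of $\varphi$ together with continuity of $H$ and $X^*$ on the compact time interval, and $X^{\mathfrak{a}^*}_T=X^*_T\ge0$ holds by the construction in \eqref{Xstarrep}; finally the triple $(\tau,\pi^*,C^*)$ attains the upper bound from the Lemma, so it is optimal.

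The main obstacle I anticipate is not the martingale representation itself but the careful handling of the two-stage structure and the jump of the habitual level at $\tau$ when writing down $X^*$ and verifying that $M$ is a genuine (uniformly integrable) martingale rather than merely a local martingale. Specifically, one must control the integrability of $\int_0^T C^*_sH_s\,ds$ — since $C^*$ inherits, through Eq.~\eqref{hc}, an accumulated integral of $\mathcal{Y}_s(\nu_\tau\Gamma_s)$ whose negative part is bounded by $L$ but whose positive part can be large — and this requires the duality bound from the Lemma plus finiteness of $A$ and $z$ (Appendices \ref{A.1}–\ref{A.3}). A secondary technical point is that the lower bound $\underline{X_t}$ from Appendix \ref{A.1} must be invoked to guarantee that the stochastic integral in \eqref{wealth} is a supermartingale, so that the budget inequality \eqref{termone} is in fact an equality at the optimum (no wealth is wasted), which is what forces \eqref{Xstarrep} to hold with equality and makes the construction consistent. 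Once these integrability and gluing issues are dispatched, the identification of $\pi^*$ is a direct coefficient comparison.
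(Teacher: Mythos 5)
Your proposal is correct and follows essentially the same route as the paper: define $X^*$ via the conditional-expectation (budget) representation with $X^*_T=0$, introduce the martingale $M_t=\E_t\{\int_0^T(C^*_s-Y_s)H_s\,ds\}$, invoke the martingale representation theorem, and identify $\pi^*$ by matching the $dB_t$-coefficient with $\sigma\pi^*_tH_t-\theta X^*_tH_t$ from It\^o's formula applied to $H_tX_t$. The additional points you raise (uniform integrability of $M$, gluing across $\tau$, and the budget constraint holding with equality at the optimum) are legitimate refinements that the paper's proof passes over silently rather than a different argument.
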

\begin{proof} Based on SDE \eqref{wealth}, the optimal wealth
process $X^*$ has the boundary constraint that $X^*_T=0$, that is,
the optimum is attained when the individual consumes all the incomes
and has nothing at time $T$. Otherwise, he/she could consume more at
time $T$ to improve the overall utility. By combing \eqref{wealth}
with $X^*_T=0$, the wealth process $X^*$  is reformulated as
follows:
\begin{eqnarray}\label{O1}
\!\! \!\! X^*_t=\frac{1}{H_t}
\E_t\left\{\int_t^T\left[ C^*_sH_s-(1-k)H_sW_s1_{\{s\leq
\tau\}}-g(\tau)De^{\xi s} H_s1_{\{s>\tau\}}\right] ds\right\}
\end{eqnarray}
for $ t\in [0, T]$.
Let $M $ be an   $\mathbb{F}$-adapted, right continuous martingale
and satisfy:
\begin{eqnarray}\label{O2}
\!\! \!\! M_t=
 \E_t\left\{\int_0^T\!\!\left [ C^*_sH_s\!-\!(1-k)H_sW_s1_{\{s\leq
\tau\}}\!-\!g(\tau)De^{\xi s} H_s1_{\{s>\tau\}}\right ] ds\right \}, a.s.,
\end{eqnarray}
for $ t\in [0, T]$.
By martingale representation theorem, there exists a unique
square-integrable process $\varphi=\{\varphi_t, \ 0\leq t \leq T\}$
satisfying
\begin{align}\label{mrt}
M_t=\int_0^t \varphi_s dB_s, \ t\leq T.
\end{align}
Thus, using \eqref{O1}, \eqref{O2} and \eqref{wealth}, we have
\begin{eqnarray}\label{mnt}
M_t\!\!&=&\!\!\E_t\left\{\int_t^T \left [C^*_sH_s\!\!-\!\!(1-k)H_sW_s1_{\{s\leq \tau\}}\!\!-\!\!g(\tau)De^{\xi s} H_s1_{\{s>\tau\}} \right ]ds\right\}\nonumber\\
&&+\int_0^t \left [C^*_sH_s\!\!-(1-k)H_sW_s1_{\{s\leq \tau\}}-g(\tau)De^{\xi s} H_s1_{\{s>\tau\}}\right ] ds\nonumber\\
&=&\!\!X_t^*H_t\!-\!\int_0^t\left[ (1\!-\!k)H_sW_s1_{s\leq \tau}\!+\!g(\tau)De^{\xi s} H_s1_{\{s>\tau\}} \!-\!C^*_sH_s\right ]ds\nonumber\\
&=&\!\!\int_0^t \left[\sigma \pi^*_s H_s-\theta X^*_sH_s\right ] dB_s.
\end{eqnarray}
Comparing the diffusion terms of $M_t$ in \eqref{mrt} and
\eqref{mnt}, we have
\begin{align*}
\pi^*_t=\frac{1}{\sigma}(\frac{\varphi_t}{H_t}+\theta X^*_t).
\end{align*}
\end{proof}
The detailed calculation of $C^*$, $X^*$ and $\pi^*$ is derived in
Appendix \ref{A.4}.  Thus, the optimal consumption and asset
allocation policies are established with the given retirement time.
\vskip 5pt At last, we treat the optimal retirement time as a
pre-commitment policy and establish the optimum at time $0$. Because
the retirement time  is variable in the second step of optimization,
we rewrite the individual's value function as follows:
\begin{align*}
V(\tau)=\E\left [\int_0^{T} u_{s}(\mathcal{Y}_{s}(\nu_\tau
\Gamma_{\tau,s})) ds\right ],
\end{align*}
where $ \Gamma_{s}$, $A$, $z$ and $\nu$  are rewritten as
$\Gamma_{\tau,s}$, $A(\tau)$, $z(\tau)$ and $\nu_\tau$, and $\tau$
is the control variable.  $\nu_\tau$ is determined by
$\E\left[\int_0^T \mathcal{Y}_{s}(\nu_\tau \Gamma_{\tau,s})
\Gamma_{\tau,s} ds\right ]= A(\tau)-h_0 z(\tau)$. \vskip 5pt The
remained work is to solve $\arg\max_{\tau} \{V(\tau)\}$. The optimum
is attained when the value of $\tau$ is at the boundary points
$\tau_{min}$,  $\tau_{max}$  or the extreme point satisfying
$\frac{\partial V(\tau)}{\partial \tau}=0$. It is difficult to
derive the explicit form of $\tau$ and we solve it numerically.
Fortunately, we can study the influence of the exogenous parameters
on the optimal retirement time by an innovative analytical method.
The specific analysis is given in the next section. \vskip 15pt
\setcounter{equation}{0}
\section{\bf Theoretical and numerical implications}
In this section, we first introduce the certainty equivalence of the
actual consumption and the habitual consumption processes. Under the
reasonable parameter settings, we study the evolution of the optimal
consumption over time and try to explain the ``retirement
consumption puzzle". Then, we study the influence of the exogenous
parameters on the optimal consumption policies. Theoretically,
exogenous variables can be divided into two categories that affect
the amount of wealth and the habitual level, and they influence the
optimal consumption ultimately through these two intermediate
variables. The last, we use an innovative idea to study the
influence of the parameters on the optimal retirement time
analytically.
\vskip 10pt
\subsection{\bf Certainty equivalence of the actual and habitual consumption  processes}
Because the actual and habitual consumption processes are both
stochastic, we usually explore $\E[C^*_t]$ and $ \E[h^*_t]$ to study
the evolutions of the two processes over time. However, the
traditional explored method is not accurate, especially under the
S-shaped utility and the long time horizon. There is a very small
probability that the amounts of  the actual and
habitual consumption are magnificent because of excellent wealth
appreciation or radical investment policies. In this circumstance,
the magnificent consumption levels of some tracks will greatly raise
the level of  the expectations $\E[C^*_t]$ and $
\E[h^*_t]$ over the long time horizon. Thus, the traditional method
loses its representativeness. \vskip 5pt We naturally explore the
certainty equivalence of the actual and habitual consumption
processes to study the optimal policies. Under this perspective, the
process $\hat{C}$ and its corresponding habitual process $\hat{h}$
satisfying $u_t(\hat{C}_t-\hat{h}_t)=\E[u_t(C^*_t-h^*_t)]$ are
defined as the certainty equivalence of the consumption processes.
That is, when the individual's deterministic actual and habitual
consumption amounts are $\{\hat{C}_t, \ t\in [0, T]\} $ and
$\{\hat{h}_t, \ t\in [0,T]\}$, he/she obtains the same utility as
the one under  the stochastic consumption processes.
Particularly, $\hat{c}_t=\hat{C}_t-\hat{h}_t$ is the excess
consumption under certainty equivalence. It directly measures the
magnitude of the utility obtained at time $t$. \vskip 5pt In the
next subsection, we study the evolution of the optimal consumption
over time under the certainty equivalence perspective. \vskip 10pt
\subsection{\bf Parameter settings and the baseline model}
In this subsection, we first set up  the reasonable parameter
settings according  to the real data in the financial
market, the labor market and the social insurance market. \vskip 5pt
For the financial market, the risk-free interest rate is $r=0.02$,
and the expected return and the volatility of the risky asset are
$\mu=0.08$, $\sigma=0.4$, respectively. As such, the market price of
risk is $\theta=0.15$. For the rules of the labor market, we assume
that the parameters of the individual's wage income are
$\alpha=0.028$, $\beta=0.02$ and $W_0=10$. The individual begins to
work at the age of $25$ and his/her maximal survival age is $100$.
Besides, the individual can choose the actual retirement age within
$[50,80]$ and the statutory retirement age is $65$. Retiring at the
statutory retirement age, the individual can obtain the full
benefit.  As such, we have $t_0=0$, $T=75$, $\tau_{min}=25$,
$\tau_{max}=55$ and $\tau_{st}=40$. Furthermore, the parameters
related to the social insurance are given as follows. The
contribution rate is $k=0.2$ and the full benefit is
$D=6$ at time $0$. The elastic parameter in
the penalty variable is $\zeta=0.015$ and the growth rate of the
benefit is $\xi=0.018$. In general, if the individual retires at the
minimal retirement age, he/she will obtain nearly $70\%$ of the
 full benefit. \vskip 5pt For the settings of the
habitual consumption level, we assume that $\psi=\eta$ usually
holds. As such, the habitual level is the arithmetic average of the
past consumption levels.  And it is realistic to assume that
$\eta=\psi=0.05$. After retirement, both the level and the
sensitivity of the habitual consumption decline, and we have $l=0.6$
and $m=0.3$.  These two parameters are originally used to depict
the changes of the habitual consumption after retirement. Thus, we
choose the logical values and change their values to study the
impacts on the optimal consumption and retirement policies. Besides,
we assume that the initial habitual level is $h_0=6$ and the minimal
consumption constraint requires that $L=0.5$. For the parameters in
the S-shaped utility function, $\rho=0.04$, $\gamma=0.8$ and
$\kappa=2.25$ are realistic settings. \vskip 5pt Then, based on the
baseline model, we exhibit the certainty equivalence of the optimal
consumption $ \hat{C} $ and the habitual consumption $ \hat{h} $. First, we assume
that the retirement time is given at the statutory retirement time.
That is, $\tau=\tau_{st}=40$. Figure 1 shows the evolution of the
optimal consumption and the habitual consumption over time. We
observe a sharp decline in the actual and habitual consumption
levels at retirement. This is consistent with the empirical evidence
observed in the ``retirement consumption puzzle". Because the
incomes of the baseline model are relatively abundant, the actual
consumption is higher than the habitual consumption, and this leads
to the spiral rise of the two consumption levels. However, the
growth rates of the actual and the habitual consumption levels
decline after retirement, which is caused by the reduced sensitivity of
the habitual formulation. In addition, we find that the initial
consumption is higher than the wage income. In this circumstance,
the individual initiates some loans to improve the early consumption
level and obtains higher utility.
\begin{figure}[H]
\begin{center}
\includegraphics[scale=0.50]{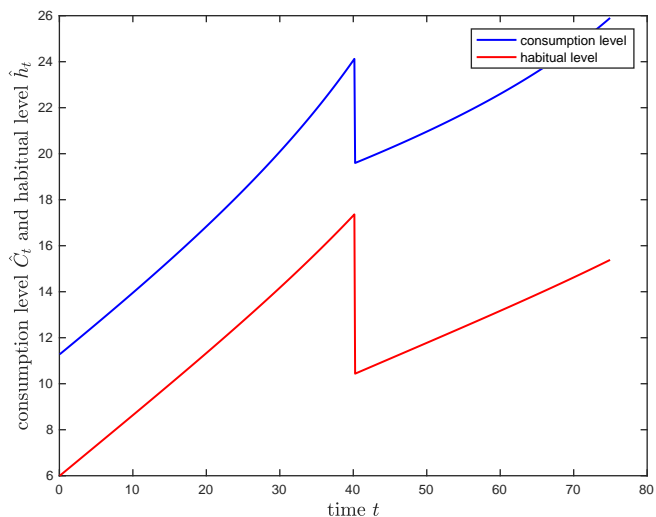}
\end{center}
\caption{The optimal consumption $\hat{C}$ and the habitual
consumption $\hat{h}$.}
\end{figure}
In Figure 2 and Figure 3, we select three typical retirement times
$\tau=35$, $40$, $ 45$  to study its influence on  the excess
consumption level $\hat{c}$ and the optimal consumption level
$\hat{C}$. \vskip 5pt Theoretically, the optimal excess
consumption level  $\hat{c}$ is balanced between two effects. The
first is the wealth effect. If the individual obtains more incomes,
he/she will naturally consume more. The second is the habitual
effect. Larger excess consumption rises the habitual consumption
level  even more and leads to more pressure on the
follow-up consumption. This prevents the individual from consuming
too much in the early time. If the habitual effect weakens, time
preference effect will be the dominance. That is, the early
consumption produces higher utility due to the human nature of
impatience. In this circumstance, the individual will consume more
at the moment, and vice verse. Thus, we observe a sharp rise in the
excess consumption level after retirement because of the reduced
sensitivity of the habitual  level and the weakened
habitual effect.
\begin{figure}[H]
\begin{center}
\includegraphics[scale=0.50]{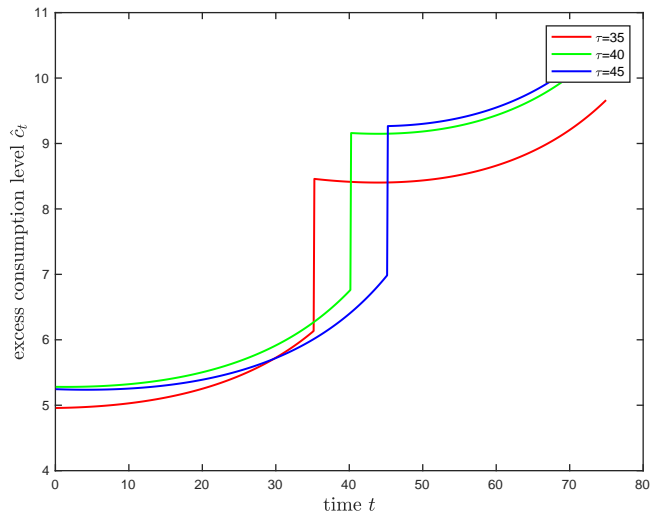}
\end{center}
\caption{The optimal excess consumption $\hat{c}$ with respect to
different retirement time $\tau$.}
\end{figure}
\begin{figure}[H]
\begin{center}
\includegraphics[scale=0.50]{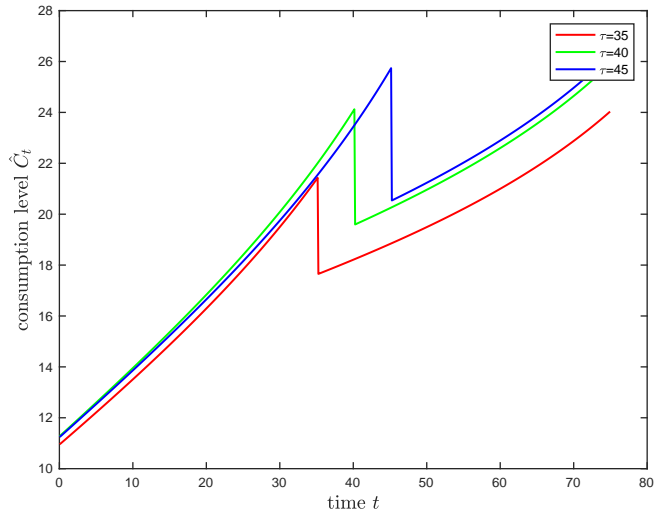}
\end{center}
\caption{The optimal consumption $\hat{C}$ with respect to different
retirement time $\tau$.}
\end{figure}
\vskip 5pt In Figure 2, the excess consumption levels are both low
before and after retirement in the $\tau=35$ case due to the lower
total incomes. Interestingly, the excess consumption is lower before
retirement and higher after retirement in the $\tau=45$ case than in
the $\tau=40$ case. Longer working time also means using the larger
habitual consumption benchmark for a longer time. Thus, the habitual
effect strengthens. The individual decreases the former consumption
and increases the latter consumption to balance the  two
effects. Furthermore, although the excess consumption rises a lot
at retirement, it cannot offset the sharp decline in the habitual
level. Thus, we observe a decline in the actual consumption
$\hat{C}$ at retirement in Figure 3. \vskip 5pt In Figure 4, we
study the optimal asset allocation policies over time
to ensure the integrity of the problem. As usual, we explore the
expectation of the amount  allocated to the risky asset. We observe
a downward curve due to the life cycle phenomenon. Thus, the special
variation assumption in habitual persistence after retirement
does not affect the trend of investment. \vskip 5pt Interestingly,
there is an unsmooth rise around the time $\tau_{min}=25$. Before
the time $\tau_{min}$, the individual's wage income is fluctuating.
However, after the time $\tau_{min}$, the individual receives static
wage and benefit incomes. In addition, the random sources of the
wage and the risky investment are the same. As such, the former
risky investment should be smaller to avoid bearing too much risk.
The latter risky investment can be larger as the risk bearing
ability is  improved. The result could also be drawn
from Eq. \eqref{pi}.
\begin{figure}[H]
\begin{center}
\includegraphics[scale=0.50]{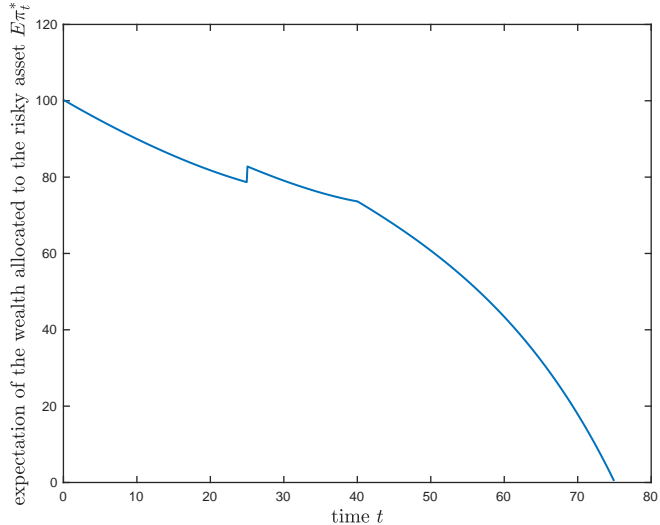}
\end{center}
\caption{The optimal risky investment amount $\E\pi^*_t$.}
\end{figure}
\vskip 10pt
\subsection{\bf Impacts of the parameters on the optimal consumption}
In this subsection, we study the impacts of the exogenous parameters
on the certainty equivalence of the optimal consumption $\hat{C}$
under the assumption that the individual retires at the statutory
retirement time. \vskip 5pt In Figure 5, we study the impacts of
total wealth $A$ (the discounted present value of the incomes) on
the optimal consumption $\hat{C}$. In fact,  the value
of $A$ is determined by a family of parameters, such as $W_0$, $D$,
$r$, $\mu$, $\sigma$, $k$, $\xi$ and $\zeta$, etc. The result
confirms the positive correlation between the total wealth and the
optimal consumption. When the total wealth $A$ is relatively large,
the trends of the optimal consumption are almost the same.
Interestingly, we observe that the percentage of decline in
consumption at retirement is negatively correlated with the total
wealth. This result is consistent with the empirical evidence that
the consumption drop is negatively correlated with income
replacement rate after retirement (cf. \cite{S2005}). However, when
$A$ is relatively small, the optimal consumption shows a declining
trend. In the case of $A=200$, which could be interpreted as $W_0=4$
and $D=2$, the total incomes are not enough to support the
consumption above the habitual level. Thus, the actual consumption
is lower than the habitual consumption. This leads to the spiral
decline of the both consumption levels and produces negative
utilities.
\begin{figure}[H]
\begin{center}
\includegraphics[scale=0.50]{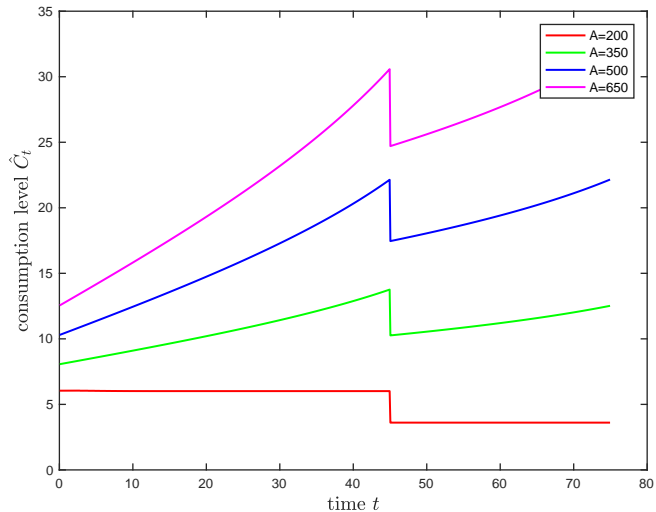}
\end{center}
\caption{The impacts of total wealth $A$ on the optimal consumption
$\hat{C}$.}
\end{figure}
In Figure 6, we study the impacts of the habitual parameters $\psi$
and $\eta$ on the optimal consumption $\hat{C}$. In order to make
the model well defined, we assume that $\psi=\eta$. As discussed in
Subsection 4.2, smaller $\psi$ and $\eta$ represent lower
sensitivity of the habitual level. Because the habitual effect
weakens, the time preference effect becomes the
dominance. As such, the individual increases former
consumption and reduces latter consumption. In the extreme case
$\psi=\eta=0$, we observe a completely declining trend of the
optimal consumption. In this circumstance, the habitual consumption
level is not affected by the former consumption. Thus, larger
consumption at the moment will produce higher utility according to
the time preference effect.
\begin{figure}[H]
\begin{center}
\includegraphics[scale=0.450]{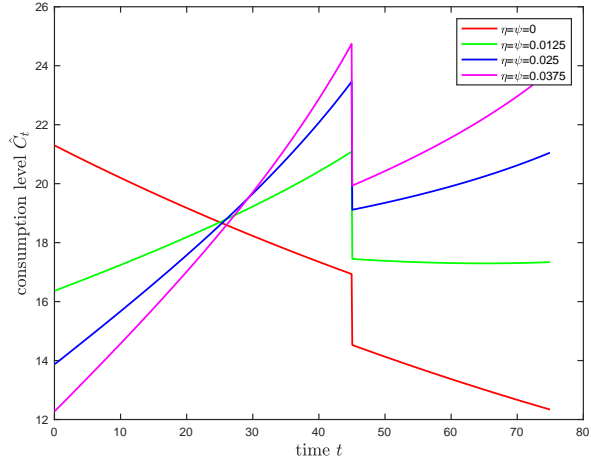}
\end{center}
\caption{The impacts of habitual parameters $\psi$ and $\eta$ on the
optimal consumption $\hat{C}$.}
\end{figure}
In Figure 7, we study the impacts of the shrinking habitual
proportion $l$ on the optimal consumption $\hat{C}$. $l$ measures the
magnitude of the decline in the level of habitual consumption after
retirement. Different $l$  leads to the redistribution of the
consumption levels before and after retirement. When $l$ is smaller,
the  lower consumption makes the individual feel the same
satisfactory as before retirement. Thus, he/she can consume more
before retirement. Although the consumption level after retirement
is reduced, the excess consumption is still considerable because of
the sharp decline of the habitual effect.
\vskip 5pt
\begin{figure}[H]
\begin{center}
\includegraphics[scale=0.450]{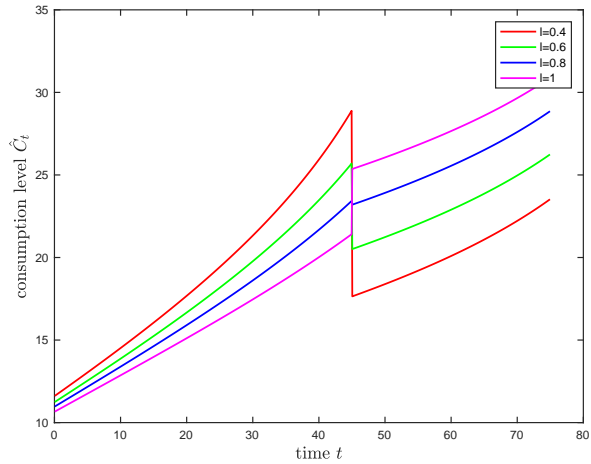}
\end{center}
\caption{The impacts of shrinking habitual proportion $l$ on the
optimal consumption $\hat{C}$.}
\end{figure}
\vskip 5pt In Figure 8, we study the  impacts of the shrinking
sensitivity parameter $m$ on the optimal consumption $\hat{C}$.
Because $m$ only influences  the sensitivity of the habitual  level
after retirement, its variation hardly has impacts on the
consumption levels before retirement. When $m$ is smaller, the
habitual effect weakens. In this circumstance, the individual
prefers to consume more in the early time because of the time
preference effect. Similar to the result in Figure 6, the optimal
consumption exhibits a declining trend after retirement in the
extreme case $m=0$.
\begin{figure}[H]
\begin{center}
\includegraphics[scale=0.50]{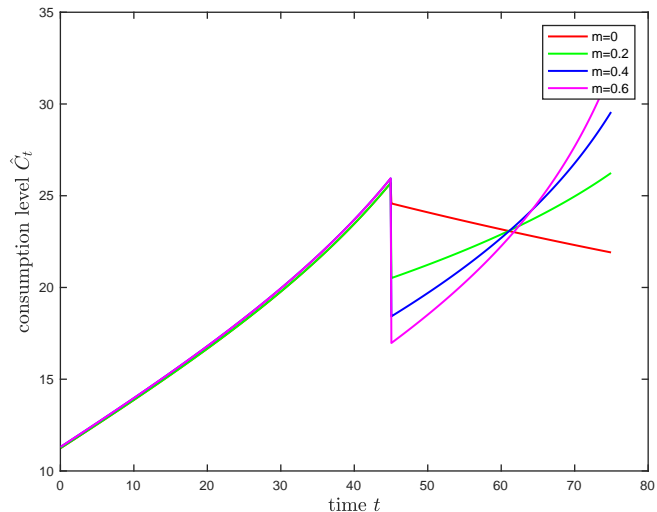}
\end{center}
\caption{The impacts of shrinking sensitivity parameter $m$ on the
optimal consumption $\hat{C}$.}
\end{figure}
In Figure 9, we study the impacts of the initial habitual
consumption level $h_0$ on the optimal consumption $\hat{C}$. The
result shows that the lower initial consumption increases the
admissible domain of the follow-up consumption. And the inertia of
the larger excess consumption in the former time leads to the higher
rise of the latter consumption. On the contrary, when $h_0$ is
large, the actual and habitual consumption levels in the latter time
are relatively small because of the small excess consumption in the
former time.
\begin{figure}[H]
\begin{center}
\includegraphics[scale=0.50]{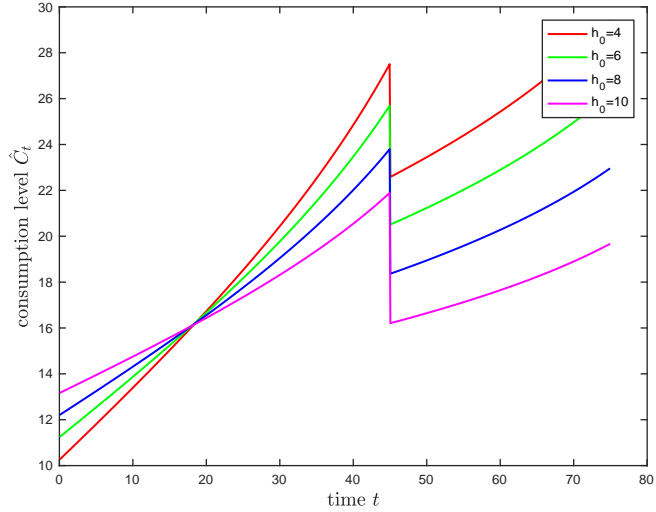}
\end{center}
\caption{The impacts of initial habitual consumption $h_0$ on the
optimal consumption $\hat{C}$.}
\end{figure}
\vskip 15pt
\subsection{\bf Optimal retirement time}
In this subsection, we treat the optimal retirement time as a
pre-commitment control variable and establish the optimal retirement
time to maximize the individual's value function. Then, we establish
an analytical method, which can be used to analyze the impacts of
the parameters on the optimal retirement time quantitatively. \vskip
5pt Before establishing the optimal retirement time, we first
analyze the two impacts of the retirement time on the overall
utility. The first is the wealth effect. The wealth effect can be
accurately estimated  by the function $A(\tau)$. In Figure 10, we
observe that $A(\tau)$ is an increasing function with respect to
$\tau$. From this perspective, the later an individual
retires, the more wealth and higher utility he/she can obtain. The
second is the habitual effect. The habitual level increases with the
extension of the working time according to the baseline model.
From this perspective, the later an individual retires,
he/she suffers from higher habitual consumption level and obtains
lower utility. In Figure 11, we modify the parameter settings to
preserve the validation of $A=500$, as such,  the influence of the
wealth effect is excluded. We observe a negative relationship
between the retirement time and the overall utility simply based on
the habitual effect. Overall, the optimal retirement
time is the balance between the two effects. \vskip 5pt
\begin{figure}[H]
\begin{center}
\includegraphics[scale=0.50]{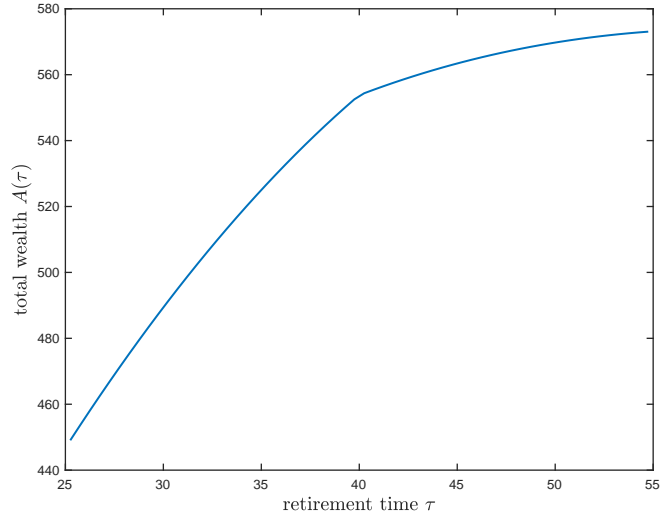}
\end{center}
\caption{The impacts of retirement time $\tau$ on the total wealth
$A(\tau)$.}
\end{figure}
\begin{figure}[H]
\begin{center}
\includegraphics[scale=0.50]{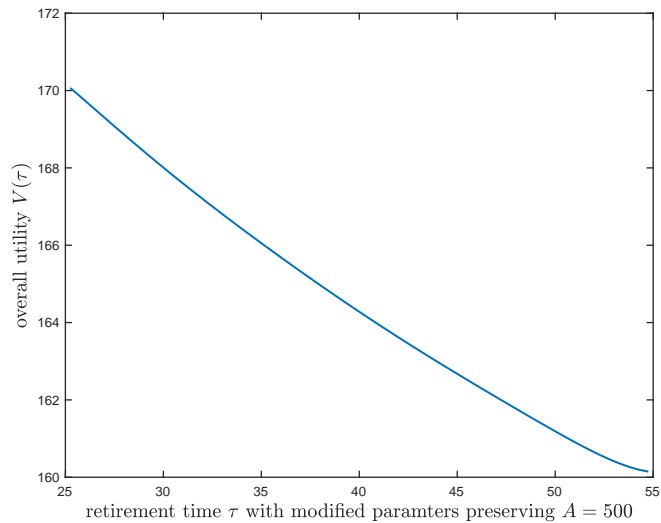}
\end{center}
\caption{The impacts of retirement time $\tau$ on overall utility
$V(\tau)$ based on habitual effect.}
\end{figure}
\vskip 5pt Next, we study the impacts of the retirement time on the
overall utility based on the two effects and establish the optimal
retirement time numerically. In Figure 12, the individual's utility
function first increases and then decreases with respect to the
extension of retirement time. In the former time, the wealth effect
dominates the habitual effect. And the habitual effect becomes the
dominance in the latter time. Interestingly, the optimal retirement
time is around $\tau_{st}=40$, which is exactly the
statutory retirement age of $65$, according to the
baseline model. \vskip 5pt
\begin{figure}[H]
\begin{center}
\includegraphics[scale=0.50]{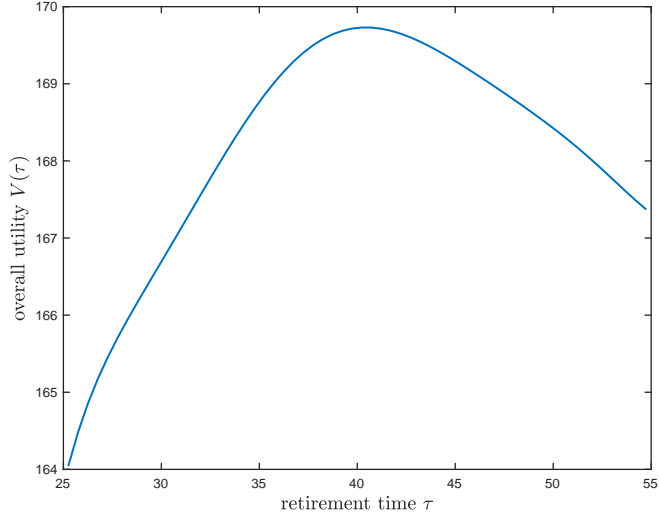}
\end{center}
\caption{The impacts of retirement time $\tau$ on overall utility
$V(\tau)$.}
\end{figure}
\vskip 5pt The last, the above discussions on the two effects are
helpful to study the impacts of the exogenous parameters on the
optimal retirement time. Besides, we establish a quantitative
analytical method as follows. For any exogenous parameter $y$, we
have the denotation that $\tau^*(y)\triangleq \arg\sup\limits_\tau\left\{
V(y,\tau)\right\}$. As such, $V_\tau(y,\tau^*(y))=0$ is valid.
Differentiating with respect to $y$ at both sides of this equality,
$\frac{d\tau^*(y)}{dy}=-\frac{V_{y\tau}(y,\tau^*(y))}{V_{\tau\tau}(y,\tau^*(y))}$,
where $V_{\tau\tau}(y,\tau^*(y))<0$. As such,
$\frac{d\tau^*(y)}{dy}$ has the same sign with
$V_{y\tau}\left((y,\tau^*(y)\right )$. Furthermore, the second-order
derivative can be approximated by
\begin{align}\label{secde} \frac{1}{\delta y \delta
\tau}\left[V(y+\delta y, \tau^*(y)\!+\!\delta \tau)\!-\!V(y,
\tau^*(y)\!+\!\delta \tau)\!-\!V(y\!+\!\delta y, \tau^*(y))\!+\!V(y,
\tau^*(y))\right].
\end{align}
Even if $V$ is not smooth, judging the sign  of \eqref{secde} is
still useful. The sign is determined by comparing the size of the
two values $V(y+\delta y, \tau^*(y)+\delta \tau)-V(y,
\tau^*(y)+\delta \tau)$ and $V(y+\delta y, \tau^*(y))-V(y,
\tau^*(y)))$, i.e.,  we can check whether the disturbance of $y$
will produce higher utility in the case of earlier retirement or
later retirement. Remarkably, we only exhibit the parameters of high
relevance and sensitivity in Table 1. Table 1 shows the relationship
between the exogenous parameters and the optimal retirement time.
The upward arrow indicates that the larger parameter leads to later
retirement, and vice verse. \vskip 5pt
\begin{table}[H]
\caption{The impacts of the parameters on optimal retirement time
$\tau^*$.}
\begin{tabular}{lllllllllll}
\hline
$W_0$ & $\alpha$  & $k$ & $D$ & $\xi$ & $\psi$ & $m$ & $l$ & $h_0$   \\
\hline
$\uparrow$ & $\uparrow$  & $\downarrow$  & $\downarrow$ & $\downarrow$ & $\downarrow$& $\downarrow$ &$\uparrow$ &$\downarrow$ \\
\hline
\end{tabular}
\end{table}
\vskip 5pt Echoing the discussions on the wealth effect and the
habitual effect, $W_0$, $\alpha$, $k$, $D$ and $\xi$  are the
parameters reflecting wealth effect, and $\psi$, $m$, $l$ and $ h_0$
 are the parameters reflecting habitual effect. Naturally, larger
initial level $W_0$ and  wage growth rate $\alpha$  will lead to
more wealth if the retirement is postponed. On the contrary, larger
$k$ leads to more contribution during the working period. And larger
$D$ and $\xi$ lead to more benefit during the retirement period.
Thus, these parameters make the individual retire early. \vskip 5pt
For the habitual effect, larger $\psi$ and  $m$ reflect higher
sensitivity of the habitual level. As such, earlier retirement is
required to prevent the habitual level from rising too high and
reducing overall utility. Besides, if the shrinking habitual
proportion $l$ is larger, the utility improvement by the
shrinking habitual level after retirement is less effective. Thus,
later retirement better balances the wealth effect and the habitual
effect in this circumstance. Naturally, early retirement is required
to control the habitual level when the initial habitual consumption
$h_0$ is large.
\vskip 25pt
 \setcounter{equation}{0}
\section{\bf Conclusion}
In this paper, we assume that the level and the sensitivity of the
habitual consumption both decline at the time of retirement. Under
the variation assumption in habitual persistence, we establish the
theoretical results consistent with the empirical evidence.  That
is, the optimal consumption experiences a sharp decline at
retirement. In fact, the optimal decisions are balanced between the
wealth effect and the habitual effect. Smaller growth inertia
(sensitivity) of the habitual level weakens the
habitual effect and
 leads to larger excess consumption after retirement. However,
this effect cannot offset the sharp decline in the habitual level.
Furthermore, the individual with higher habitual sensitivity retires
early to prevent the habitual level from rising too high.
Moreover, if the shrinking habitual parameter at
retirement is larger, delaying retirement is optimal
because of that the utility improvement by the shrinking
habitual level after retirement is less effective and the wealth
effect becomes the dominance.
\vskip 0.8cm {\bf Acknowledgements.}
The authors acknowledge the support from the National Natural
Science Foundation of China ( No.11871036, No.11471183).  The
authors also thank the members of the group of Mathematical Finance
and Actuarial Science at the Department of Mathematical Sciences,
Tsinghua University for their feedbacks and useful conversations.
\vskip 25pt

\vskip 25pt
\appendix
\vskip 15pt
\section{Details of the calculations}
\renewcommand{\theequation}{\thesection .\arabic{equation}}
\vskip 10pt

\subsection{\bf The calculation of $\underline{C_t}$ and $\underline{X_t}$}\label{A.1}
Before deriving the exact form of $\underline{X_t}$, we claim that
the consumption level $C_t$ has the lower boundary
$\underline{C_t}$, which is the required minimal consumption:
\begin{align*}
\underline{C_t}=e^{-\int_0^t(\eta(s)-\psi(s)) ds}h_0-\psi(t) \int_0^t e^{-\int_0^s (\eta(u)-\psi(u)) du} Lds-L.
\end{align*}
In fact, $\underline{C_t}$ is obtained when the individual keeps the
consumption level with the gap $-L$ to the habitual consumption
level. As such, $\underline{X_t}$ can be written as
\begin{align*}
\underline{X_t}=\int_t^T \left [e^{-r(s-t)}\underline{C_s}-g(\tau)De^{\xi
s}e^{-r(s-t)}1_{\{s>\tau\}}\right ]ds.
\end{align*}
Remarkably, the condition $X_t\geq \underline{X_t}$ indicates that
the total wealth must be adequate to support the required minimal
consumption in the worst scenario that the wage income and the risky
investment return decay to $0$, i.e.,  Brownian motion $B   $
experiences a sharp
decline.\\
\vskip 10pt
\subsection{\bf The calculation of $\Gamma$}\label{A.2}

As $\frac{H_s}{H_t}=\exp \left\{ -r(s-t)-\frac{1}{2}\theta^2
(s-t)-\theta (B_s-B_t)\right\}$, $s\geq t$, is independent of
$\mathcal{F}_t$, we have \begin{align*}
\E_t\left\{\frac{H_s}{H_t}\right \}=\E\left\{\frac{H_s}{H_t}\right
\}=e^{-r(s-t)}.
\end{align*}
As such, for $t\in[0,T]$,
\begin{align*}
\Gamma_t&= H_t+\psi(t) \E_t\left\{\int_t^T e^{-\int_t^s (\eta(u)-\psi(u)) du}H_sds\right \}\\
&=H_t\left\{1+\psi(t) \E_t\left[\int_t^T e^{-\int_t^s (\eta(u)-\psi(u)) du}\frac{H_s}{H_t}ds\right]\right\}\\
&=H_t\left\{1+\psi(t) \int_t^T e^{-\int_t^s (\eta(u)-\psi(u)) du}e^{-r(s-t)}ds\right\}\\
&=H_t(1+F_t),
\end{align*}
where
\begin{eqnarray*}
F_t&=&\psi(t)\int_t^T e^{-\int_t^s (\eta(u)-\psi(u)) du}e^{-r(s-t)}ds\\
&=&\left\{
\begin{array}{lll}
\frac{l\psi e^{-(r+\eta-\psi)(\tau-t)}}{r+m\eta-m\psi}\left[1-e^{-(r+m\eta-m\psi)(T-\tau)}\right ]\\
+\frac{\psi}{r+\eta-\psi}\left[1-e^{-(r+\eta-\psi)(\tau-t)}\right ], \!\!\!\!\!\!\!\!\! &0\leq t<\tau,\\
\frac{m\psi}{r+m\eta-m\psi}\left[1-e^{-(r+m\eta-m\psi)(T-t)}\right], \!\! \! \!\!\!\!\!\!& \tau \leq t<T.
\end{array}
\right.
\end{eqnarray*}
\vskip 10pt
\subsection{\bf The calculation of $A$ and $z$}\label{A.3}

Similar to  Appendix \ref{A.2}, for $t\leq s\leq \tau_{min}$,
\begin{align*}
&\E_t\left(\frac{H_s}{H_t}\frac{W_s}{W_t}\right)=\E\left(\frac{H_s}{H_t}\frac{W_s}{W_t}\right)\\
=&\E\left[e^{(\alpha-\frac{\beta^2}{2})(s-t)+\beta(B_s-B_t)}e^{-r(s-t)-\frac{1}{2}\theta^2 (s-t)-\theta (B_s-B_t)}\right ]\\
=&e^{(\alpha-r+\theta\beta)(s-t)}.
\end{align*}
As such, we  derive the exact value of the restrictions in
\eqref{restriction} as follows: \begin{align}\label{A}
A&= \E\left\{\int_0^T\left[ (1-k)H_sW_s1_{\{s\leq \tau\}}+ g(\tau)De^{\xi s} H_s1_{\{s>\tau\}}\right ]ds\right\}\nonumber\\
&= \E_0\left\{\int_0^T \left[(1-k)H_sW_s1_{\{s\leq \tau\}}+ g(\tau)De^{\xi s} H_s1_{\{s>\tau\}}\right ]ds\right\}\nonumber\\
&= \int_0^T \left\{(1-k)\E_0[H_sW_s]1_{\{s\leq \tau\}}+ g(\tau)De^{\xi s} E_0[H_s]1_{\{s>\tau\}}\right\}ds\nonumber\\
&=\int_0^T \left[(1-k)W_0 (e^{(\alpha-r+\theta\beta)s}1_{\{s\leq \tau\cap s\leq \tau_{min}\}}\right.\nonumber\\
&\left.+e^{(\alpha-r+\theta\beta)\tau_{min}}e^{-r(s-\tau_{min})}1_{\{\tau_{min}<s\leq \tau  \}})+g(\tau)De^{\xi s}e^{-rs} 1_{\{s>\tau\}}\right]ds,
\end{align}
and the value of $z$ is
\begin{align}\label{z}
z&=\E\left[\int_0^T e^{-\int_0^s \eta(u)du} \Gamma_s ds\right ]\nonumber\\
&=\E\left [\int_0^T e^{-\int_0^s \eta(u)du} H_s(1+F_s)ds\right ]\nonumber\\
&=\int_0^T e^{-\int_0^s \eta(u)du} e^{-rs}(1+F_s)ds.
\end{align}
By combining \eqref{A} and \eqref{z}, the exact value of $A-h_0 z$ is
established.
\vskip 10pt
\subsection{\bf The calculation of $C^*$, $X^*$ and corresponding $\pi^*$}\label{A.4}

The function $\mathcal{Y}_s$ has the exact form that
\begin{align*}
\mathcal{Y}_s(y)=\left\{ \begin{array}{ll}
-L, &y\geq y_0(s),\\
(\frac{y}{e^{-\rho s}})^{-\frac{1}{\gamma}}, &0<y<y_0(s),
\end{array}\right.
\end{align*}
where $y_0(s)$ satisfies the following equation:
\begin{align*}
e^{-\rho s}\frac{(\frac{y_0(s)}{e^{-\rho
s}})^{-\frac{1-\gamma}{\gamma}}}{1-\gamma}-(-\kappa)e^{-\rho s}
\frac{L^{1-\gamma}}{1-\gamma}=y_0(s)\left[(\frac{y_0(s)}{e^{-\rho s}})^{-\frac{1}{\gamma}}+L\right].
\end{align*}
In fact, $y_0(s)$ is the gradient of the tangential line from $-L$ to $(\frac{y_0(s)}{e^{-\rho s}})^{-\frac{1}{\gamma}}$.
\vskip 5pt
Based on the  exact form of the function $\mathcal{Y}_s$, we have
\begin{align*}
\mathcal{Y}_s(\nu \Gamma_s)&=-L 1_{\{\nu \Gamma_s\geq y_0(s)\}}+\left(\frac{\nu \Gamma_s}{e^{-\rho s}}\right)^{-\frac{1}{\gamma}}1_{\{\nu \Gamma_s< y_0(s)\}}\\
&=-L 1_{\{ H_s\geq \frac{y_0(s)}{\nu (1+F_s)}\}}+H_s^{-\frac{1}{\gamma}} \left [\nu(1+F_s)e^{\rho s}\right]^{-\frac{1}{\gamma}}1_{\{ H_s< \frac{y_0(s)}{\nu (1+F_s)}\}}.
\end{align*}
As such,  we calculate the process  $X^*$ as follows:
\begin{align}\label{X}
X^*_t&=\frac{1}{H_t} \E_t\left\{\int_t^T \left [C^*_sH_s-(1-k)H_sW_s1_{\{s\leq \tau\}}-g(\tau)De^{\xi s} H_s1_{\{s>\tau\}}\right ] ds\right \}\nonumber\\
&=\frac{1}{H_t} \E_t\left[\int_t^T C^*_sH_s ds\right]-(1-k)\frac{1}{H_t} \E_t\left[\int_t^T H_sW_s1_{\{s\leq \tau\}}ds\right]\nonumber\\
&\quad -\frac{1}{H_t}\E_t\left[\int_t^T g(\tau)De^{\xi s} H_s1_{\{s>\tau\}} ds\right]\nonumber\\
&=\frac{1}{H_t} \E_t\left[\int_t^T C^*_sH_s ds\right]-(1-k) W_t \E_t\left[\int_t^T
\frac{H_sW_s}{H_t W_t}1_{\{s\leq \tau\}}ds\right]\nonumber\\
&\quad - \E_t\left[\int_t^T g(\tau)De^{\xi
s} \frac{H_s}{H_t}1_{\{s>\tau\}} ds\right].
\end{align}
The last two terms in \eqref{X} are
\begin{align*}
W_t \E_t\left[\int_t^T \frac{H_sW_s}{H_t W_t}1_{\{s\leq \tau\}}ds\right]=W_t O_t,
\end{align*}
where
\begin{align*}
O_t=&\int_t^T \big [e^{(\alpha-r+\theta \beta)(s-t)}1_{\{s\leq \tau\cup s\leq \tau_{min}\}}+e^{(\alpha-r+\theta\beta)(\tau_{min}-t)}e^{-r(s-\tau_{min})}1_{\{t<\tau_{min}<s\leq \tau  \}}\\
&+e^{-r(s-t)}1_{\{\tau_{min}\leq t<s\leq \tau  \}}\big]ds,
\end{align*}
and
\begin{align*}
\E_t\left[\int_t^T g(\tau)De^{\xi s} \frac{H_s}{H_t}1_{\{s>\tau\}}
ds\right ]=\int_t^T g(\tau)De^{\xi s} e^{-r(s-t)}1_{\{s>\tau\}} ds.
\end{align*}
The first term in \eqref{X} can be rewritten as follows:
\begin{align}\label{Xre}
&\frac{1}{H_t} \E_t\left[\int_t^T C^*_sH_s ds\right ]\nonumber\\
=&\frac{1}{H_t} \E_t\left\{\int_t^T \left[\mathcal{Y}_s(\nu \Gamma_s)+e^{-\int_0^s(\eta(u)-\psi(u)) du}h_0+\psi(s) \int_0^s e^{-\int_0^u (\eta(v)-\psi(v)) dv} \mathcal{Y}_u(\nu \Gamma_u)du\right ]H_s ds\right \}\nonumber\\
=&\int_t^T \left [e^{-\int_0^s(\eta(u)-\psi(u)) du}h_0+\psi(s) \int_0^t e^{-\int_0^u (\eta(v)-\psi(v)) dv} \mathcal{Y}_u(\nu \Gamma_u)du\right ]e^{-r(s-t)} ds\nonumber\\
&+\frac{1}{H_t} \E_t\left \{\int_t^T \left [\mathcal{Y}_s(\nu \Gamma_s)+\psi(s)
\int_t^s e^{-\int_0^u (\eta(v)-\psi(v)) dv} \mathcal{Y}_u(\nu
\Gamma_u)du\right ]H_s ds\right \}.
\end{align}
 The second equation holds as the integral can be divided into two parts:
 one   before time $t$ and the other   after time $t$, and the former part is $\mathcal{F}_t$-measurable.   We denote the latter part as $f_t(H_t)$:
\begin{align*}
f_t(H_t)&\triangleq\frac{1}{H_t} \E_t\left\{\int_t^T \left[\mathcal{Y}_s(\nu \Gamma_s)+\psi(s) \int_t^s e^{-\int_0^u (\eta(v)-\psi(v)) dv} \mathcal{Y}_u(\nu \Gamma_u)du\right ]H_s ds\right\}\\
&=\frac{1}{H_t} \E_t\left\{\int_t^T \left [\mathcal{Y}_s(\nu \Gamma_s)H_s+\psi(s) \int_t^s e^{-\int_0^u (\eta(v)-\psi(v)) dv} \mathcal{Y}_u(\nu \Gamma_u)\E[H_s|H_u] du\right ]ds\right\}\\
&=\frac{1}{H_t} \E_t\left\{\int_t^T \mathcal{Y}_s(\nu \Gamma_s)H_s\left[1+e^{-\int_0^s (\eta(v)-\psi(v)) dv} \int_s^T \psi(u) e^{-r(u-s)}du\right ]ds\right\}\\
&=\frac{1}{H_t} \E_t\left\{\int_t^T \mathcal{Y}_s(\nu \Gamma_s)H_sN_sds\right\}\\
&=\int_t^T \frac{1}{H_t} \E_t \left\{ \mathcal{Y}_s(\nu \Gamma_s)H_s \right\} N_sds\\
&=\int_t^T e^{-r(s-t)}N_s \left\{(e^{\rho s} y_0(s))^{-\frac{1}{\gamma}}\frac{\Phi'(d_{1,t,s}(\frac{y_0(s)}{\nu
(1+F_s)H_t}))}{\Phi'(d_{2,t,s}(\frac{y_0(s)}{\nu
(1+F_s)H_t}))}\left[1-\Phi(d_{2,t,s}(\frac{y_0(s)}{\nu
(1+F_s)H_t}))\right]\right.\\
&\quad \left.-\Phi(d_{1,t,s}(\frac{y_0(s)}{\nu (1+F_s)H_t}))L\right\} ds,
\end{align*}
where $\Phi(\cdot)$ is cumulative distribution function of the
standard normal distribution,

\begin{align*}
&N_s=1+e^{-\int_0^s (\eta(v)-\psi(v)) dv} \int_s^T \psi(u)
e^{-r(u-s)}du,\\
&d_{1,t,s}(x)=\frac{1}{-\theta\sqrt{s-t}}\left[\log(x)+(r+\frac{\theta^2}{2})(s-t)\right ],\nonumber\\
&d_{2,t,s}(x)=\frac{1}{-\theta\sqrt{s-t}}\left[\log(x)
+(r+\frac{\theta^2}{2})(s-t)+(1-\gamma)\theta\sqrt{s-t}\right].
\end{align*}
Thus, the wealth process can be rewritten as follows:
\begin{align*}
X^*_t=f_t(H_t)-(1-k)W_t O_t+\int_0^t \cdots ds.
\end{align*}
Based on martingale representation theorem and a huge amount of
detailed calculations,  using the volatility term of $X^*_t$, it
follows that \begin{align}\label{pi} \pi^*_t
=\frac{-f'_t(H_t)\theta H_t-(1-k)\beta W_t O_t 1_{\{t<\tau_{min}\}}}{\sigma}.
\end{align}
\end{document}